\documentclass[11pt]{article}
\usepackage{amssymb,amsmath,amsthm,amscd,latexsym}
\usepackage{mathrsfs}
\usepackage{mathrsfs}
\usepackage{amsfonts}
\usepackage{amsmath}
\usepackage{amssymb}
\usepackage{amscd}
 \usepackage{color}
 \usepackage{bm}

\renewcommand{\paragraph}{\roman{paragraph}}
 \setlength{\topmargin}{0in}
\setlength{\textheight}{8.5in} \setlength{\textwidth}{6.0in}
\oddsidemargin=0.3in \evensidemargin=-0.3in
\input cyracc.def

\parskip 3pt

\newcommand{\F}{\mathbb{F}}

\newtheorem{thm}{\bfseries  Theorem}[section]
\newtheorem{lem}[thm]{\bfseries   Lemma}

\begin{document}
\title{\bf New constructions of asymptotically optimal codebooks via character sums over a local ring
\thanks{This research is supported by the National Natural Science Foundation of China
under Grant 11771007 and Grant 61572027.}
}
\author{Liqin Qian\thanks{Department of Mathematics, Nanjing University of Aeronautics and Astronautics, Nanjing, Jiangsu, 210007, China, {\tt qianliqin\_1108@163.com}},
Xiwang Cao\thanks{Corresponding author, Department of Mathematics, Nanjing University of Aeronautics and Astronautics, Nanjing, Jiangsu, 210007, China; Laboratory of Information Security, Institute of Information Engineering, Chinese Academy of Sciences, Beijing 100042, China, {\tt xwcao@nuaa.edu.cn}},
Wei Lu\thanks{School of Mathematics, Southeast University, Nanjing, Jiangsu, 211189, China, {\tt  luwei1010@seu.edu.cn}},
Xia Wu\thanks{School of Mathematics, Southeast University, Nanjing, Jiangsu, 211189, China, {\tt wuxia80@seu.edu.cn}}
}

\date{}
\maketitle
\begin{abstract}
In this paper, we present explicit description on the additive characters, multiplicative characters and Gauss sums over a local ring. As an
application, based on the additive characters and multiplicative characters satisfying certain conditions, two new constructions of complex codebooks over a local ring are introduced. With these two constructions, we obtain two families of codebooks achieving asymptotically optimal with respect to the Welch bound. It's worth mentioning that the codebooks constructed in this paper have new parameters.
\end{abstract}
{\bf Keywords:} Local ring, Gauss sum, codebook, Welch bound \\
{\bf MSC(2010):} 11T 24, 11 T23, 11 T71, 13 M05, 94 B25

\section{Introduction}
Let $C=\{\textbf{c}_0,\textbf{c}_1,\cdots, \textbf{c}_{N-1}\}$ be a set of $N$ unit-norm complex vectors $\textbf{c}_l\in \mathbb{C}^K$ over an alphabet $A$,  where $l=0, 1,\cdots , N-1$. The size of $A$ is called the alphabet size of $C$. Such a set $C$ is called an $(N, K)$ codebook (also called a signal set). The maximum cross-correlation amplitude, which is a performance measure of a codebook in practical applications, of the $(N, K)$ codebook $C$ is defined as
\begin{eqnarray*}
  I_{\max}(C) &=& \max_{0\leq i<j\leq N-1} |\textbf{c}_i\textbf{c}_j^H|,
\end{eqnarray*}
where $\textbf{c}_j^H$ denotes the conjugate transpose of the complex vector $\textbf{c}_j$. For a certain length $K$, it is desirable to design a codebook such that the number $N$ of codewords is as large as possible and the maximum cross-correlation amplitude $I_{\max}(C)$ is as small as possible. To evaluate a codebook $C$ with parameters $(N, K)$, it is important to find the minimum achievable $I_{\max}(C)$ or its lower bound. However, for $I_{\max}(C)$, we have the well-known Welch bound in the following.
\begin{lem}\label{lem1} \cite{LW} For any $(N, K)$ codebook $C$ with $N\geq K$,
\begin{eqnarray}\label{eq1}
  I_{\max}(C) &\geq& I_w=\sqrt{\frac{N-K}{(N-1)K}}.
\end{eqnarray}
Furthermore, the equality in (\ref{eq1}) is achieved if and only if
\begin{eqnarray*}
  |\textbf{c}_i\textbf{c}_j^H| &=& \sqrt{\frac{N-K}{(N-1)K}}
\end{eqnarray*}
for all pairs $(i, j)$ with $i\neq j$.
\end{lem}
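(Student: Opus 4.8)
The plan is to work with the $N\times N$ Gram matrix $G=\big(\mathbf{c}_i\mathbf{c}_j^H\big)_{0\le i,j\le N-1}$ of the codebook and to compare two expressions for $\operatorname{tr}(G^2)$. First I would record the structural facts about $G$: it is Hermitian and positive semidefinite; its diagonal entries are $\mathbf{c}_i\mathbf{c}_i^H=1$ by the unit-norm assumption, so $\operatorname{tr}(G)=N$; and, since every $\mathbf{c}_i$ lies in $\mathbb{C}^K$, writing $M$ for the $K\times N$ matrix whose $i$-th column is $\mathbf{c}_i^H$ gives $G=M^HM$, hence $\operatorname{rank}(G)\le K$. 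Consequently $G$ has at most $K$ nonzero eigenvalues $\lambda_1,\dots,\lambda_r\ge 0$ with $r\le K$ and $\sum_{t=1}^r\lambda_t=N$.

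Next I would obtain a lower bound on $\operatorname{tr}(G^2)=\sum_{t=1}^r\lambda_t^2$ by the Cauchy--Schwarz inequality: $N^2=\big(\sum_{t=1}^r\lambda_t\big)^2\le r\sum_{t=1}^r\lambda_t^2\le K\operatorname{tr}(G^2)$, so that $\operatorname{tr}(G^2)\ge N^2/K$. On the other hand, expanding directly, $\operatorname{tr}(G^2)=\sum_{i,j}|\mathbf{c}_i\mathbf{c}_j^H|^2=N+\sum_{i\ne j}|\mathbf{c}_i\mathbf{c}_j^H|^2\le N+N(N-1)I_{\max}(C)^2$. Combining the two estimates gives $N^2/K\le N+N(N-1)I_{\max}(C)^2$; dividing by $N$ and rearranging yields $I_{\max}(C)^2\ge \frac{N-K}{(N-1)K}$, which is \eqref{eq1}.

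Finally, for the equality characterization I would trace back through both inequalities. If equality holds in \eqref{eq1}, then $N+N(N-1)I_{\max}(C)^2=N^2/K$, which forces both displayed inequalities above to be equalities; equality in $\sum_{i\ne j}|\mathbf{c}_i\mathbf{c}_j^H|^2\le N(N-1)I_{\max}(C)^2$ together with $|\mathbf{c}_i\mathbf{c}_j^H|\le I_{\max}(C)$ for each pair forces $|\mathbf{c}_i\mathbf{c}_j^H|=I_{\max}(C)=\sqrt{\frac{N-K}{(N-1)K}}$ for all $i\ne j$. The converse is immediate, since if all off-diagonal moduli equal $\sqrt{\frac{N-K}{(N-1)K}}$ then $I_{\max}(C)$ is precisely this common value.

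I do not expect a serious obstacle; the argument is classical. The only points requiring care are justifying $\operatorname{rank}(G)\le K$ (so that the Cauchy--Schwarz step uses the factor $K$ rather than $N$, which is exactly what makes the bound nontrivial), and keeping precise track of which inequality is tight when in order to deliver the ``if and only if'' clause.
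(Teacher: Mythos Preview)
Your argument is correct and is precisely the classical Gram-matrix proof of the Welch bound. Note, however, that the paper does not supply its own proof of this lemma: it is stated with a citation to \cite{LW} and used as a black box. So there is no ``paper's proof'' to compare against; your write-up would serve as a self-contained proof where the paper has none. The only minor remark is that in the converse direction you might also observe (for completeness, though it is not strictly needed for the ``if and only if'' as stated) that equality in the Cauchy--Schwarz step forces all nonzero eigenvalues of $G$ to coincide and $r=K$, i.e.\ the frame is tight; but the lemma as formulated only asks for the equiangularity condition, which you have handled correctly.
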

A codebook is referred to as a maximum-Welch-bound-equality (MWBE) codebook \cite{DS} or an equiangular tight frame \cite{JK} if it meets the Welch bound equality in ($\ref{eq1}$). Codebooks meeting the Welch bound are used to distinguish among the signals of different users in code-division multiple-access (CDMA) systems \cite{MM}. In addition, codebooks meeting optimal (or asymptotically optimal) with respect to the Welch bound are much preferred in many practical applications, such as, multiple description coding over erasure channels \cite{SH}, communications \cite{DS}, compressed sensing \cite{CW}, space-time codes \cite{TK}, coding theory \cite{DGS} and quantum computing \cite{RBSC} etc.. In general, it is very difficult to construct optimal codebooks achieving the Welch bound (i.e. MWBE). Hence, many researchers attempted to construct asymptotically optimal codebooks, i.e., the minimum achievable $I_{\max}(C)$ of the codebook $C$ nearly achieving the Welch bound for large $N$. There are many results in regard to optimal or almost optimal codebooks by the Welch bound, interested readers may refer to [1, 2, 4-6, 9-12, 14-16, 18, 28, 29]. It is important that the construction method of codebooks. At present, many researchers constructed the codebooks based on difference sets, almost difference sets, relative difference sets, binary row selection sequences and cyclotomic classes.

It is well known that the additive characters, multiplicative characters and Gauss sums over finite fields and some of their good properties
\cite[Chapter 5]{LNC}. Especially, they have many rich applications in coding theory. It's worth mentioning that some researchers constructed codebooks by using the character sums of finite fields \cite{DF, LC, ZF}. Later, G. Luo and X. Cao proposed two constructions of complex codebooks from character sums over the Galois ring $GR(p^2, r)$ in \cite{LC2} based on existing results \cite{LZF}. In fact, we know that many scholars have done a lot of research over local rings \cite{GSF, LL, SWLS} etc..  Motivated by \cite{LC2} and \cite{LZF}, a natural question is to explore the character sums over the ring $R=\F_q+u\F_q~(u^2=0)$, is it possible to construct codebooks over the ring $R$ based on the character sums we studied and obtain several classes of asymptotically optimal codebooks with respect to the Welch bound?

This paper will give a positive answer to this question. This manuscript has three main contributions. One contribution of this paper is to give explicit description on the additive characters, multiplicative characters and establish a Gauss sum over a local ring for the first time. Another contribution of this paper is to focus on the constructions of codebooks over the ring $R=\F_q+u\F_q~(u^2=0)$ by using the character sums. Finally, we show that the maximum cross-correlation amplitudes $I_{\max}(C)$ of these codebooks asymptotically meet the Welch bound and obtain new parameters by comparing with the parameters of some known classes of asymptotically optimal codebooks.

The rest of this paper is arranged as follows. Section 2 presents some notations and basic results which will be
needed in subsequent sections. In Section 3, we explicit description on the additive characters and multiplicative characters over a local ring. In Section 4, we present computation on Gauss sums over a local ring. Section 5 introduces two generic families of codebooks meeting asymptotically optimal with respect to the Welch bound. In Section 6, we conclude this paper and present several open problems.

\section{Preliminaries}
Let $\F_q$ denote the finite field with $q$ elements and $q=p^m$, where $p$ is a prime and $m$ is a positive integer. We consider the chain ring $R=\F_q+u\F_q=\{a+bu: a, b\in \F_q\} (u^2=0)$ with the unique maximal ideal $M=\langle u\rangle$. In fact, $R=\F_q\oplus u\F_q \simeq \F_q^2$ is a two-dimensional vector space over $\F_q$ and $|R|=q^2.$ The invertible elements of $R$ is $$R^*=R\backslash M=\F_q^*+u\F_q=\{a+bu: a\in \F_q^*, b\in \F_q\}$$ with $|R^*|=q(q-1)$. In fact, $R^*$ can also be represented as $\F_q^*\times (1+M)~~({\rm direct~product}).$

A character $\chi$ of a finite abelian group $G$ is a homomorphism from $G$ into the multiplicative group $U$ of complex numbers of absolute value 1, that is, a mapping from $G$ into $U$ with $\chi(g_1g_2)=\chi(g_1)\chi(g_2)$ for all $g_1, g_2\in G.$ Next, we recall the the additive characters and multiplicative characters of the finite field $\F_q$.

$\bullet$ The additive character $\chi$ of $\F_q$ defined by $$\chi(c)=e^{\frac{2\pi i{\rm Tr}(c)}{p}}$$ for all $c\in \F_q$, where Tr: $\F_q\longrightarrow \F_p$ is the absolute trace function from $\F_q$ to $\F_p$~(see Definition 2.22 in \cite{LNC}). For any $c_1, c_2\in \F_q$, we have
\begin{eqnarray}\label{den2}
  \chi(c_1+c_2) &=&\chi(c_1)\chi(c_2).
\end{eqnarray}
Moreover, for $b\in \F_q$, the function $\chi_b$ is defined as $\chi_b(c)=\chi(bc)$ for all $c\in \F_q$.\\
$\bullet$ The multiplicative character $\psi_j$ of $\F_q$ defined by $$\psi_j(g^k)=e^{\frac{2\pi ijk}{q-1}}$$ for each $j=0,1,\cdots, q-2$, where $k=0,1,\cdots, q-2$ and $g$ is a fixed primitive element of $\F_q$. For any $c_1, c_2\in \F_q^*$, we have
\begin{eqnarray}\label{den3}
  \psi_j(c_1c_2) &=&\psi_j(c_1)\psi_j(c_2).
\end{eqnarray}
Now, let $\psi$ be a multiplicative and $\chi$ an additive character of $\F_q$. Then the Gauss sum $G(\psi, \chi)$ of $\F_q$ is defined by
\begin{eqnarray*}
  G(\psi, \chi)&=&\sum\limits_{c\in \F_q^*}\psi(c)\chi(c).
\end{eqnarray*}
However, we now need to study the additive and multiplicative characters of a local ring $R=\F_q+u\F_q~(u^2=0)$, which implies that the character of the ring $R$ are described in detail similarly by the definition of the character of the finite field $\F_q$. Furthermore, the explicit description on the additive and multiplicative characters of $R$ we present should be satisfied the similar properties above equalities (\ref{den2}) and (\ref{den3}), respectively. In addition, we establish the Gauss sum of $R$ by the Gauss sum of $\F_q$. Hence, we will present the the additive and multiplicative characters of $R$ in the following section based on the characters of finite fields and propose the Gauss sum of $R$ in Section 4.
\section{Characters}
In this section, we will give the additive characters and multiplicative characters of $R$.\\\\
$\blacktriangle$\textbf{ \large Additive characters of $R$}

The group of additive characters of $(R, +)$ is
$$\widehat{R}:=\{\lambda: R\longrightarrow \mathbb{C}^*| \lambda(\alpha+\beta)=\lambda(\alpha)\lambda(\beta), \alpha, \beta \in R\}.$$
For any additive character $\lambda$ of $R$ $$\lambda: R \longrightarrow\mathbb{C}^*.$$ Since $\lambda(a_0+ua_1)=\lambda(a_0)\lambda(ua_1)$ for any $a_0, a_1\in \F_q,$ we define two maps as follows:
\begin{itemize}
  \item $$\lambda^{'}: \F_q \longrightarrow \mathbb{C}^*$$ by $\lambda^{'}(c):=\lambda(c)$ for $c\in \F_q.$
  \item $$\lambda^{''}: \F_q \longrightarrow \mathbb{C}^*$$ by $\lambda^{''}(c):=\lambda(uc)$ for $c\in \F_q.$
\end{itemize}
Therefore, it is easy to prove that $\lambda^{'}(c_1+c_2)=\lambda^{'}(c_1)\lambda^{'}(c_2)$ and $\lambda^{''}(c_1+c_2)=\lambda^{''}(c_1)\lambda^{''}(c_2)$ for $c_1, c_2 \in \F_q.$ Based on this, we know that $\lambda^{'}$ and $\lambda^{''}$ are additive characters of $(\F_q, +)$, then there exist $b, c \in \F_q$ such that $$\lambda^{'}(x)=\zeta_p^{{\rm Tr}(bx)}=\chi_{b}(x), \lambda^{''}(x)=\zeta_p^{{\rm Tr}(cx)}=\chi_{c}(x)$$ for all $x\in \F_q$, where $\zeta_p=e^{\frac{2\pi i}{p}}$ is a primitive $p$th root of unity over $\F_q.$ Hence, we can get the additive character of $R$
\begin{eqnarray*}
  \lambda(a_0+ua_1) &=& \lambda'(a_0)\lambda''(a_1)\\
  &=& \chi_{b}(a_0)\chi_{c}(a_1).
\end{eqnarray*}
Thus, there is an one-to-one correspondence:
\begin{eqnarray*}
 \tau : \widehat{(R,+)} &\longrightarrow& \widehat{(\mathbb{F}_q,+)}\times \widehat{(\mathbb{F}_q,+)},\\
  \lambda &\longmapsto& (\chi_b, \chi_c).
\end{eqnarray*}
It is easy to prove that the mapping $\tau$ is an isomorphism.\\
$\blacktriangle$\textbf{ \large Multiplicative characters of $R$}

The structure of the multiplicative group $R^*$ is $$R^*=\F_q^*\times (1+M)~~({\rm direct~product}).$$ Now, we have
\begin{eqnarray*}
  R^* &=& \{a_0+ua_1: a_0\in \F_q^*, a_1\in \F_q\} \\
   &=& \{b_0(1+ub_1): b_0\in \F_q^*, b_1\in \F_q\}.
\end{eqnarray*}

The group of multiplicative characters of $R$ is denoted by $\widehat{R}^*$ and $\widehat{R}^*=\widehat{\F}_q^*\times\widehat{(1+M)}$. We define $$\widehat{R}^*:=\{\varphi: R^*\longrightarrow \mathbb{C}^*| \varphi(\alpha\beta)=\varphi(\alpha)\varphi(\beta), \alpha, \beta \in R\}.$$
For any multiplicative character $\varphi$ of $R$ $$\varphi: R^* \longrightarrow\mathbb{C}^*.$$ Since $\varphi(b_0(1+ub_1))=\varphi(b_0)\varphi(1+ub_1)$ for any $b_0\in \F_q^*, b_1\in \F_q,$ we define two maps as follows:
\begin{itemize}
  \item $$\varphi^{'}: \F_q^*\longrightarrow \mathbb{C}^*$$ by $\varphi^{'}(c):=\varphi(c)$ for $c\in \F_q^*.$
  \item $$\varphi^{''}: \F_q\longrightarrow \mathbb{C}^*$$ by $\varphi^{''}(c):=\varphi(1+uc)$ for $c\in \F_q.$
\end{itemize}
For any $c_1, c_2 \in \F_q^*$, we have $\varphi'(c_1c_2)=\varphi'(c_1)\varphi'(c_2)$ and
\begin{eqnarray*}
  \varphi''(c_1+c_2)&=& \varphi(1+u(c_1+c_2)) \\
   &=& \varphi((1+uc_1)(1+uc_2)) \\
   &=& \varphi(1+uc_1)\varphi(1+uc_2)\\
   &=&\varphi''(c_1)\varphi''(c_2).
\end{eqnarray*}
Based on this, we can obtain that $\varphi'$ is a multiplicative character of $\F_q$ and $\varphi''$ is an additive character of $\F_q$. Hence, we can get the multiplicative character of $R$ $$\varphi(b_0(1+ub_1))=\varphi'(b_0)\varphi''(b_1),$$ where $\varphi'\in \widehat{\F}_q^*$ and $\varphi''\in \widehat{\F}_q.$ Since $\varphi''$ is an additive character of $\F_q$, then there exists $a\in \F_q$ such that $\varphi''=\chi_a.$
Moreover, we have
\begin{eqnarray*}
\sigma :  \widehat{(R^*,\ast)} &\longrightarrow& \widehat{\mathbb({\F}_q^*,\ast)}\times \widehat{(\mathbb{F}_q,+)}, \\
  \varphi &\longmapsto& (\psi, \chi_a),
\end{eqnarray*}
where $\psi=\varphi'$ is a multiplicative character of $\F_q$. One can show that the mapping $\sigma$ is an isomorphism.
\section{Gaussian sums}
Let $\lambda$ and $\varphi$ be an additive character and a multiplicative character of $R$, respectively. The Gaussian sum
for $\lambda$ and $\varphi$ of $R=\F_q+u\F_q~(u^2=0)$ is defined by
\begin{eqnarray*}
  G_R(\varphi, \lambda) &=&\sum\limits_{t\in R^*}\varphi(t)\lambda(t).
\end{eqnarray*}

In this section, we calculate the value of $G_R(\varphi, \lambda)$. For convenience, we denote $\varphi:=\psi\star\chi_a~({\rm namely,}~\varphi(t)=\psi(t_0)\chi_a(t_1)), \lambda:=\chi_b\star\chi_c~({\rm namely,}~\lambda(t)=\chi_b(t_0)\chi_c(t_0t_1))$ according to Section 3, where $a, b, c\in \F_q$ and $t=t_0(1+ut_1)\in R.$ Hence, we denote $G_R(\varphi, \lambda):=G(\psi\star\chi_a, \chi_b\star\chi_c)$.
\begin{thm}\label{thm1}
 Let $\varphi$ be a multiplicative character and $\lambda$ be an additive character of $R$, where $\varphi:=\psi\star\chi_a, \lambda:=\chi_b\star\chi_c$ and $a, b, c\in \F_q.$ Then the Gaussian sum $G_R(\varphi, \lambda)$ satisfies
 \begin{equation*}\label{den1}
G_R(\varphi, \lambda)=\begin{cases}
\emph{ }qG(\psi, \chi_b),  ~~~~~~~~~{\rm if}~a=0, c=0;\\
   \emph{ }0, ~~~~~~~~~~~~~~~~~~~{\rm if}~a=0, c\neq0; \\
   \emph{ }0, ~~~~~~~~~~~~~~~~~~~{\rm if}~a\neq0, c=0; \\
   \emph{ }q\psi(-\frac{a}{c})\chi(-\frac{ab}{c}), ~~~{\rm if}~a\neq0, c\neq 0, \\
\end{cases}
\end{equation*}
where
\begin{equation*}
G(\psi, \chi_b)=\begin{cases}
\emph{ }q-1,  ~~~~{\rm if}~\psi~{\rm is~trivial}, b=0;\\
   \emph{ }-1, ~~~~~{\rm if}~\psi~{\rm is~trivial}, b\neq0; \\
   \emph{ }0,~~~~~~~~~{\rm if}~\psi~{\rm is~nontrivial}, b=0.\\
\end{cases}
\end{equation*}If $\psi$ is nontrivial and $b\neq0$, then $|G(\psi, \chi_b)|=q^\frac{1}{2}$.
\end{thm}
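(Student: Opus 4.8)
The plan is to reduce the ring Gauss sum to a double sum over $\F_q^*\times\F_q$ by exploiting the direct-product decomposition $R^*=\F_q^*\times(1+M)$ recorded in Section 2, and then to collapse the resulting inner sum using the orthogonality relation for the additive characters of $\F_q$.

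First I would parametrize the summation: write each $t\in R^*$ uniquely as $t=t_0(1+ut_1)=t_0+ut_0t_1$ with $t_0\in\F_q^*$ and $t_1\in\F_q$. By the character descriptions of Section 3 (and the isomorphisms $\tau$ and $\sigma$), this gives $\varphi(t)=\psi(t_0)\chi_a(t_1)$ and $\lambda(t)=\chi_b(t_0)\chi_c(t_0t_1)$. Substituting into the definition of $G_R(\varphi,\lambda)$ and separating the $t_0$-dependence from the $t_1$-dependence yields
$$G_R(\varphi,\lambda)=\sum_{t_0\in\F_q^*}\psi(t_0)\chi_b(t_0)\sum_{t_1\in\F_q}\chi\bigl((a+ct_0)t_1\bigr),$$
where the inner sum equals $q$ if $a+ct_0=0$ and $0$ otherwise.

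The next step is the four-way case analysis according to whether $a$ and $c$ vanish. When $a=c=0$ the inner sum is identically $q$, so $G_R(\varphi,\lambda)=q\sum_{t_0\in\F_q^*}\psi(t_0)\chi_b(t_0)=q\,G(\psi,\chi_b)$. When exactly one of $a,c$ is zero, $a+ct_0$ takes only nonzero values as $t_0$ ranges over $\F_q^*$, so the inner sum always vanishes and $G_R(\varphi,\lambda)=0$. When $a\neq0$ and $c\neq0$, the equation $a+ct_0=0$ has the unique solution $t_0=-a/c\in\F_q^*$, so only that term contributes, giving $G_R(\varphi,\lambda)=q\,\psi(-a/c)\,\chi_b(-a/c)=q\,\psi(-a/c)\,\chi(-ab/c)$.

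Finally, the stated values of $G(\psi,\chi_b)$ are the standard finite-field facts: $G(\psi,\chi_0)=q-1$ for trivial $\psi$; $G(\psi,\chi_b)=\sum_{c\in\F_q^*}\chi(bc)=-1$ for trivial $\psi$ and $b\neq0$; $G(\psi,\chi_0)=\sum_{c\in\F_q^*}\psi(c)=0$ for nontrivial $\psi$; and $|G(\psi,\chi_b)|=q^{1/2}$ for nontrivial $\psi$ and $b\neq0$, which is the classical Gauss sum modulus (see \cite[Chapter 5]{LNC}). I do not expect a serious obstacle: the computation is essentially mechanical. The one place to be careful is transporting the Section 3 formulas correctly through the substitution $t=t_0+ut_0t_1$ — in particular recognizing that the additive character contributes the factor $\chi_c(t_0t_1)$ rather than $\chi_c(t_1)$ — since it is precisely this coupling of $t_0$ into the inner sum that produces the entire case distinction.
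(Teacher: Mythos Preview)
Your proposal is correct and follows essentially the same approach as the paper's own proof: parametrize $t=t_0(1+ut_1)$, separate the sum into an outer sum over $t_0\in\F_q^*$ and an inner sum over $t_1\in\F_q$, collapse the inner sum by additive orthogonality to the condition $a+ct_0=0$, and then do the four-way case split on $(a,c)$. Your closing remarks on the classical values of $G(\psi,\chi_b)$ are slightly more explicit than the paper's, which simply quotes these as known facts about finite-field Gauss sums.
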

\begin{proof} Now, let $\varphi:=\psi\star\chi_a$ and $\lambda:=\chi_b\star\chi_c$ with $a,b,c\in \F_q.$ Assume that $t=t_0(1+ut_1)$, where $t_0\in \F_q^*$ and $t_1\in \F_q.$
\begin{eqnarray*}
  G_R(\varphi, \lambda) &=& \sum\limits_{t\in R^*}\varphi(t)\lambda(t)\\
   &=& \sum\limits_{t_0\in \F_q^*, t_1\in \F_q}\varphi(t_0(1+ut_1))\lambda(t_0(1+ut_1)) \\
   &=&  \sum\limits_{t_0\in \F_q^*, t_1\in \F_q}\psi(t_0)\chi_a(t_1)\chi_b(t_0)\chi_c(t_0t_1)\\
   &=&  \sum\limits_{t_0\in \F_q^*, t_1\in \F_q}\psi(t_0)\chi(at_1+bt_0+ct_0t_1)\\
    &=&  \sum\limits_{t_0\in \F_q^*}\psi(t_0)\chi(bt_0)\sum\limits_{t_1\in \F_q}\chi((a+ct_0)t_1)\\
    &=&  q\sum\limits_{t_0\in \F_q^*, a+ct_0=0}\psi(t_0)\chi(bt_0)\\
     &=&\begin{cases}
 \emph{ }qG(\psi, \chi_b),  ~~~~~~~~~{\rm if}~a=0, c=0;\\
   \emph{ }0, ~~~~~~~~~~~~~~~~~~~~{\rm if}~a=0, c\neq0; \\
   \emph{ }0, ~~~~~~~~~~~~~~~~~~~~{\rm if}~a\neq0, c=0; \\
   \emph{ }q\psi(-\frac{a}{c})\chi(-\frac{ab}{c}),~~{\rm if}~a\neq0, c\neq 0, \\
\end{cases}
\end{eqnarray*}
where $G(\psi, \chi_b)$ is a Gaussian sum of $\F_q.$
\end{proof}

\section{Two families of asymptotically optimal codebooks}
In this section, we study two classes of codebooks asymptotically
achieving the Welch bound by using character sums over the local ring $R=\F_q+u\F_q~(u^2=0)$. Note that $|R^*|=q(q-1)$ and we can write $K=q(q-1)$. Let $\varphi:=\psi\star\chi_a$ and $\lambda:=\chi_b\star\chi_c$ with $a,b,c\in \F_q.$ Assume that $t=t_0(1+ut_1)$, where $t_0\in \F_q^*$ and $t_1\in \F_q.$
Then we can define a set $C_0(R)$ of length $K$ as
\begin{eqnarray*}
  C_0(R) &=& \{\frac{1}{\sqrt{K}}(\varphi(t)\lambda(t))_{t\in R^*}, \varphi\in \widehat{R}^*, \lambda\in \widehat{R}\} \\
   &=& \{\frac{1}{\sqrt{K}}(\psi(t_0)\chi_a(t_1)\chi_b(t_0)\chi_c(t_0t_1))_{t_0\in \F_q^*, t_1\in \F_q}, \psi\in \widehat{\F}_q^*,\chi_a, \chi_b, \chi_c \in \widehat{\F}_q\}.
\end{eqnarray*}
Next, we will give the two constructions of codebooks over the ring $R$.
\subsection{The first construction of codebooks}
The codebook $C_1(R)$ of length $K$ over $R$ is constructed as
\begin{eqnarray*}
  C_1(R) &=& \{\frac{1}{\sqrt{K}}(\psi(t_0)\chi_a(t_1)\chi_b(t_0)\chi_c(t_0t_1))_{t_0\in \F_q^*, t_1\in \F_q}, \\
   &&  \psi~{\rm is~a~fixed~multiplicative~character~over}~ \F_q,\chi_a, \chi_b, \chi_c \in \widehat{\F}_q\}.
\end{eqnarray*}

Based on this construction of the codebook $C_1(R)$, we have the following theorem.
\begin{thm}\label{thm2}
Let $C_1(R)$ be a codebook defined as above. Then $C_1(R)$ is a $(q^3, q(q-1))$ codebook with the maximum
cross-correlation amplitude $I_{max}(C_1(R))=\frac{1}{q-1}$.
\end{thm}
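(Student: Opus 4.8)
The plan is to verify the three claims in turn: that the length is $K=q(q-1)$, that there are $q^{3}$ codewords, and that $I_{\max}(C_{1}(R))=\frac1{q-1}$. The first two are immediate. Each entry of a codeword equals $\frac1{\sqrt K}$ times a product of character values and hence has modulus $\frac1{\sqrt K}$; since a codeword has $|R^{*}|=q(q-1)=K$ coordinates, its squared norm is $K\cdot\frac1K=1$, so every codeword is a unit vector in $\mathbb{C}^{K}$. The codewords are indexed by the triples $(a,b,c)\in\F_{q}^{3}$ (with the multiplicative character $\psi$ held fixed), of which there are $q^{3}$; that distinct triples give distinct codewords will come out of the correlation computation below, since two equal unit vectors would have inner product of modulus $1>\frac1{q-1}$.

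The core of the argument is the computation of $\mathbf{c}\,\mathbf{c}'^{H}$ for codewords $\mathbf{c},\mathbf{c}'$ attached to the fixed $\psi$ and to triples $(a,b,c)$ and $(a',b',c')$. Writing $t=t_{0}(1+ut_{1})$ and using $|\psi(t_{0})|=1$ together with $\chi_{x}(y)\overline{\chi_{x'}(y)}=\chi\bigl((x-x')y\bigr)$, the product reduces to
\begin{eqnarray*}
\mathbf{c}\,\mathbf{c}'^{H}
&=&\frac1K\sum_{t_{0}\in\F_{q}^{*},\,t_{1}\in\F_{q}}\chi\bigl((b-b')t_{0}+(a-a')t_{1}+(c-c')t_{0}t_{1}\bigr) \\
&=&\frac1K\,G_{R}\bigl(\psi_{0}\star\chi_{a-a'},\,\chi_{b-b'}\star\chi_{c-c'}\bigr),
\end{eqnarray*}
where $\psi_{0}$ is the trivial multiplicative character of $\F_{q}$; indeed, by the definition in Section~4 the last sum is exactly the Gaussian sum over $R$ attached to the additive character $\chi_{b-b'}\star\chi_{c-c'}$ and the multiplicative character $\psi_{0}\star\chi_{a-a'}$. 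Thus Theorem~\ref{thm1} applies with $a,b,c$ replaced by $a-a',b-b',c-c'$ and $\psi$ taken to be trivial.

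It then remains to run through the four cases of Theorem~\ref{thm1} and divide by $K=q(q-1)$, using that $G(\psi_{0},\chi_{b-b'})=q-1$ if $b=b'$ and $-1$ if $b\ne b'$. When $a=a'$ and $c=c'$ one gets $\mathbf{c}\,\mathbf{c}'^{H}=\frac{q}{K}G(\psi_{0},\chi_{b-b'})$, which equals $1$ if in addition $b=b'$ (the diagonal, to be discarded) and has modulus $\frac{q}{q(q-1)}=\frac1{q-1}$ if $b\ne b'$. When exactly one of $a-a'$, $c-c'$ vanishes, the Gaussian sum is $0$. When $a\ne a'$ and $c\ne c'$, Theorem~\ref{thm1} gives $\bigl|\mathbf{c}\,\mathbf{c}'^{H}\bigr|=\frac{q}{q(q-1)}=\frac1{q-1}$. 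Hence $|\mathbf{c}_{i}\mathbf{c}_{j}^{H}|\le\frac1{q-1}$ for all $i\ne j$, with equality attained (e.g.\ for $a=a'$, $c=c'$, $b\ne b'$), so $I_{\max}(C_{1}(R))=\frac1{q-1}$; in particular the $q^{3}$ codewords are pairwise distinct and $C_{1}(R)$ is indeed a $(q^{3},q(q-1))$ codebook.

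None of the steps is deep. The two points that need care are recognizing that the correlation sum is an instance of the Gaussian sum $G_{R}$ of Section~4 --- the fixed multiplicative character $\psi$ cancels against its own conjugate, which is why Theorem~\ref{thm1} is invoked with a \emph{trivial} multiplicative character and not with $\psi$ --- and keeping the case split straight, above all not lumping the ``all three differences zero'' case together with the genuine off-diagonal pairs. I expect this bookkeeping to be the main, if minor, obstacle.
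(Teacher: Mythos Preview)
Your proposal is correct and follows essentially the same approach as the paper: you recognize that the fixed $\psi$ cancels against its conjugate so that the inner product becomes $\frac{1}{K}G_{R}(\psi_{0}\star\chi_{a-a'},\chi_{b-b'}\star\chi_{c-c'})$, and then you invoke Theorem~\ref{thm1} with trivial $\psi$ to read off the case split. You are in fact a bit more careful than the paper, explicitly verifying the unit-norm condition and the pairwise distinctness of the $q^{3}$ codewords.
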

\begin{proof}
According to the definition of $C_1(R)$, it is easy to see that $C_1(R)$ has $N=q^3$ codewords of length $K=q(q-1)$. Next, our task is to determine the maximum
cross-correlation amplitude $I_{max}$ of the codebook $C_1(R)$. Let $\textbf{c}_1$ and $\textbf{c}_2$ be any two distinct codewords in $C_1(R)$, where
$\textbf{c}_1=\frac{1}{\sqrt{K}}(\psi(t_0)\chi_{a_1}(t_1)\chi_{b_1}(t_0)\chi_{c_1}(t_0t_1))_{t_0\in \F_q^*, t_1\in \F_q}$ and $\textbf{c}_2=\frac{1}{\sqrt{K}}(\psi(t_0)\chi_{a_2}(t_1)\chi_{b_2}(t_0)\chi_{c_2}(t_0t_1))_{t_0\in \F_q^*, t_1\in \F_q}$. Without loss of generality, we denote the trivial multiplicative character of $\F_q$ by $\psi_0$. Then we have
\begin{eqnarray*}
  \textbf{c}_1\textbf{c}_2^H &=&\frac{1}{K}\sum\limits_{t_0\in \F_q^*, t_1\in \F_q}\psi(t_0)\chi_{a_1}(t_1)\chi_{b_1}(t_0)\chi_{c_1}(t_0t_1)\overline{\psi(t_0)\chi_{a_2}(t_1)\chi_{b_2}(t_0)\chi_{c_2}(t_0t_1)}\\
   &=&\frac{1}{K}\sum\limits_{t_0\in \F_q^*, t_1\in \F_q}\psi_0(t_0)\chi((a_1-a_2)t_1+(b_1-b_2)t_0+(c_1-c_2)t_0t_1) \\
   &=&\frac{1}{K}\sum\limits_{t_0\in \F_q^*}\psi_0(t_0)\chi{((b_1-b_2)t_0)}\sum\limits_{t_1\in \F_q}\chi((a_1-a_2)t_1+(c_1-c_2)t_0t_1)\\
   &=& \frac{1}{K}\sum\limits_{t_0\in \F_q^*}\psi_0(t_0)\chi{(bt_0)}\sum\limits_{t_1\in \F_q}\chi((a+ct_0)t_1)~({\rm Set}~a=a_1-a_2, b=b_1-b_2, c=c_1-c_2)\\
   &=&\frac{q}{K}\sum\limits_{t_0\in \F_q^*, a+ct_0=0}\psi_0(t_0)\chi_b(t_0)\\
   &=& \frac{1}{K}G_R(\varphi, \lambda)~({\rm By~the~proof~of~Theorem~\ref{thm1},~where}~\varphi:=\psi_0\star\chi_a, \lambda:=\chi_b\star\chi_c)
\end{eqnarray*}
Since $\textbf{c}_1\neq \textbf{c}_2$, then $a, b$ and $c$ are not all equal to $0$.
In view of Theorem \ref{thm1}, we have
\begin{equation*}
K\textbf{c}_1\textbf{c}_2^H=\begin{cases}
\emph{ }-q,  ~~~~~~~~~~~{\rm if}~a=0, c=0, b\neq 0;\\
   \emph{ }q\chi(-\frac{ab}{c}), ~~~~~{\rm if}~a\neq0, c\neq0;\\
    \emph{ }0, ~~~~~~~~~~~~~~~{\rm otherwise}.\\
\end{cases}
\end{equation*}
Consequently, we infer that $|\textbf{c}_1\textbf{c}_2^H|\in \{0, \frac{1}{q-1}\}$ for any two distinct codewords $\textbf{c}_1, \textbf{c}_2$ in $C_1(R)$.
Hence, $I_{max}(C_1(R))=\frac{1}{q-1}.$
\end{proof}

By Theorem \ref{thm2}, we can calculate the ratio $\frac{I_{max}(C_1(R))}{I_w}$, which is to prove that the codebook $C_1(R)$ is asymptotically optimal.
\begin{thm}\label{thm3}
Let the symbols be the same as those in Theorem \ref{thm2}. Then the codebook $C_1(R)$ asymptotically meets the
Welch bound.
\end{thm}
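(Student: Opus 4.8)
The plan is to compute the limit of the ratio $\frac{I_{\max}(C_1(R))}{I_w}$ as $q \to \infty$ and show it equals $1$, which is precisely the definition of asymptotic optimality with respect to the Welch bound. From Theorem~\ref{thm2} we already know $I_{\max}(C_1(R)) = \frac{1}{q-1}$ exactly, so the only remaining work is to write down the Welch bound $I_w$ for the parameters $(N,K) = (q^3, q(q-1))$ and estimate it.

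First I would substitute $N = q^3$ and $K = q(q-1)$ into the formula from Lemma~\ref{lem1}, namely $I_w = \sqrt{\frac{N-K}{(N-1)K}}$. This gives
\begin{equation*}
I_w = \sqrt{\frac{q^3 - q(q-1)}{(q^3-1)\,q(q-1)}} = \sqrt{\frac{q^2 - q + 1}{(q^3-1)(q-1)}},
\end{equation*}
after cancelling a factor of $q$ from numerator and denominator. Then I would form the ratio
\begin{equation*}
\frac{I_{\max}(C_1(R))}{I_w} = \frac{1}{q-1} \cdot \sqrt{\frac{(q^3-1)(q-1)}{q^2-q+1}} = \sqrt{\frac{(q^3-1)(q-1)}{(q-1)^2(q^2-q+1)}} = \sqrt{\frac{q^3-1}{(q-1)(q^2-q+1)}}.
\end{equation*}
Since $q^3 - 1 = (q-1)(q^2+q+1)$, this simplifies cleanly to $\sqrt{\frac{q^2+q+1}{q^2-q+1}}$.

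The final step is to observe that $\frac{q^2+q+1}{q^2-q+1} \to 1$ as $q \to \infty$, hence the ratio tends to $1$. One can make this quantitative: $\frac{q^2+q+1}{q^2-q+1} = 1 + \frac{2q}{q^2-q+1}$, and $\frac{2q}{q^2-q+1} \to 0$, so $\frac{I_{\max}(C_1(R))}{I_w} = \sqrt{1 + O(1/q)} \to 1$. This establishes that $C_1(R)$ is asymptotically optimal. There is no real obstacle here — the computation is entirely routine once Theorem~\ref{thm2} is in hand; the only mild care needed is the algebraic simplification using the factorization $q^3 - 1 = (q-1)(q^2+q+1)$ to get a transparent closed form for the ratio, and possibly remarking that the ratio is in fact slightly larger than $1$ for every finite $q$ (since $q^2+q+1 > q^2-q+1$), which is consistent with the Welch bound being a genuine lower bound that is approached but never met by this family.
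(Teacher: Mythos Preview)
Your proposal is correct and follows essentially the same approach as the paper: substitute $N=q^3$, $K=q(q-1)$ into the Welch bound, form the ratio $I_{\max}(C_1(R))/I_w$, and take the limit as $q\to\infty$. Your version goes slightly further than the paper by using the factorization $q^3-1=(q-1)(q^2+q+1)$ to simplify the ratio to the transparent closed form $\sqrt{(q^2+q+1)/(q^2-q+1)}$, whereas the paper leaves it as $\sqrt{(q^4-q^3-q+1)/((q^2-q+1)(q-1)^2)}$; these are of course equal, and both clearly tend to $1$.
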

\begin{proof}
In view of Theorem \ref{thm2}, note that $N=q^3$ and $K=q(q-1)$. Then the corresponding Welch bound of the codebook $C_1(R)$ is
\begin{eqnarray*}
  I_w &=& \sqrt{\frac{N-K}{(N-1)K}} \\
   &=& \sqrt{\frac{q^3-q(q-1)}{(q^3-1)q(q-1)}}\\
   &=&\sqrt{\frac{q^2-q+1}{q^4-q^3-q+1}}.
\end{eqnarray*}It follows from Theorem \ref{thm2}, then we have
\begin{equation*}
\frac{I_{max}(C_1(R))}{I_w}=\sqrt{\frac{q^4-q^3-q+1}{(q^2-q+1)(q-1)^2}}
\end{equation*}
Obviously, we get $\lim\limits_{q\longrightarrow \infty}\frac{I_{max}(C_1(R))}{I_w}=1$, which implies that $C_1(R)$ asymptotically
meets the Welch bound.
\end{proof}

\subsection{The second construction of codebooks}
The codebook $C_2(R)$ of length $K$ over $R$ is constructed as
\begin{eqnarray*}
  C_2(R) &=& \{\frac{1}{\sqrt{K}}(\psi(t_0)\chi_a(t_1)\chi_b(t_0)\chi_c(t_0t_1))_{t_0\in \F_q^*, t_1\in \F_q}, \\
   && \psi\in \widehat{\F}_q^*, \chi_b~{\rm is~a~fixed~additive~character~over}~\F_q,\chi_a, \chi_c \in \widehat{\F}_q\}.
\end{eqnarray*}

With this construction, we will figure up the maximum cross-correlation amplitude $I_{max}(C_2(R))$ as follows.
\begin{thm}\label{thm4}
Let $C_2(R)$ be a codebook defined as above. Then $C_2(R)$ is a $(q^2(q-1), q(q-1))$ codebook with the maximum
cross-correlation amplitude $I_{max}(C_2(R))=\frac{1}{q-1}$.
\end{thm}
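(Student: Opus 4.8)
The plan is to mirror the proof of Theorem~\ref{thm2}, the only difference being which character is held fixed, and then to count the codewords and the possible inner-product values accordingly. First I would record the parameters: since $\psi$ ranges over $\widehat{\F}_q^*$ (which has $q-1$ elements), $\chi_a$ and $\chi_c$ each range over $\widehat{\F}_q$ (which has $q$ elements), and $\chi_b$ is fixed, the codebook $C_2(R)$ has $N=(q-1)q\cdot q=q^2(q-1)$ codewords, each of length $K=|R^*|=q(q-1)$; this establishes the stated parameters $(q^2(q-1),q(q-1))$.

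Next I would compute the inner product of two distinct codewords. Writing $\mathbf c_1$ with data $(\psi^{(1)},\chi_{a_1},\chi_b,\chi_{c_1})$ and $\mathbf c_2$ with data $(\psi^{(2)},\chi_{a_2},\chi_b,\chi_{c_2})$, the fixed character $\chi_b$ cancels in $\mathbf c_1\mathbf c_2^H$, leaving
\begin{equation*}
K\,\mathbf c_1\mathbf c_2^H=\sum_{t_0\in\F_q^*,\,t_1\in\F_q}\psi(t_0)\,\chi\big((a_1-a_2)t_1+(c_1-c_2)t_0t_1\big),
\end{equation*}
where $\psi:=\psi^{(1)}\overline{\psi^{(2)}}=\psi^{(1)}\psi^{(2)^{-1}}$ is a multiplicative character of $\F_q$, and I set $a=a_1-a_2$, $c=c_1-c_2$. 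Summing over $t_1$ first forces $a+ct_0=0$, exactly as in Theorem~\ref{thm1} with the parameter "$b$" there equal to $0$ and the multiplicative character now possibly nontrivial. Applying Theorem~\ref{thm1} with $b=0$: if $a=c=0$ then $K\mathbf c_1\mathbf c_2^H=q\,G(\psi,\chi_0)$, which is $q(q-1)$ when $\psi$ is trivial (impossible here, since then $\mathbf c_1=\mathbf c_2$) and $0$ when $\psi$ is nontrivial; if $a=0,c\neq0$ or $a\neq0,c=0$ the sum is $0$; and if $a\neq0,c\neq0$ it equals $q\,\psi(-a/c)\,\chi(0)=q\,\psi(-a/c)$, which has absolute value $q$. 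Hence $|\mathbf c_1\mathbf c_2^H|\in\{0,\tfrac1{q-1}\}$, giving $I_{\max}(C_2(R))=\tfrac1{q-1}$.

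The one point that needs care — and the main obstacle — is verifying that the value $\tfrac{1}{q-1}$ is actually attained, i.e.\ that the codebook is not accidentally orthogonal; for this I would exhibit an explicit pair, e.g.\ take $\psi^{(1)}=\psi^{(2)}$ (so $\psi$ trivial), $c_1\neq c_2$, and $a_1,a_2$ chosen with $a=a_1-a_2\neq0$, so that $a\neq0,c\neq0$ and the inner product has modulus $q/K=\tfrac1{q-1}$; one must also confirm such a pair consists of genuinely distinct codewords, which is clear since $\chi_{a_1}\neq\chi_{a_2}$. I would also note the subtlety that distinct codewords could a priori coincide as vectors even with different index data, but the argument above shows $\mathbf c_1=\mathbf c_2$ forces $a=c=0$ and $\psi$ trivial, i.e.\ $(a_1,b,c_1,\psi^{(1)})=(a_2,b,c_2,\psi^{(2)})$, so the indexing is faithful and $N=q^2(q-1)$ is correct. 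Assembling these observations completes the proof.
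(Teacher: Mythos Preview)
Your proposal is correct and follows essentially the same approach as the paper: compute $\mathbf c_1\mathbf c_2^H$, observe that the fixed $\chi_b$ cancels, sum over $t_1$ to reduce to the condition $a+ct_0=0$, and then run the same case analysis via Theorem~\ref{thm1} with $b=0$. Your additional checks that the value $\tfrac{1}{q-1}$ is actually attained and that the indexing $(\psi,a,c)\mapsto\mathbf c$ is faithful are not in the paper's proof but are welcome points of rigor.
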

\begin{proof}
According to the definition of $C_2(R)$, it is obvious that $C_2(R)$ has $N=q^2(q-1)$ codewords of length $K=q(q-1)$. Next, our goal is to determine the maximum
cross-correlation amplitude $I_{max}$ of the codebook $C_2(R)$. Let $\textbf{c}_1$ and $\textbf{c}_2$ be any two distinct codewords in $C_2(R)$, where
$\textbf{c}_1=\frac{1}{\sqrt{K}}(\psi_1(t_0)\chi_{a_1}(t_1)\chi_{b}(t_0)\chi_{c_1}(t_0t_1))_{t_0\in \F_q^*, t_1\in \F_q}$ and $\textbf{c}_2=\frac{1}{\sqrt{K}}(\psi_2(t_0)\chi_{a_2}(t_1)\chi_{b}(t_0)\chi_{c_2}(t_0t_1))_{t_0\in \F_q^*, t_1\in \F_q}$. Then we have
\begin{eqnarray*}
  \textbf{c}_1\textbf{c}_2^H &=&\frac{1}{K}\sum\limits_{t_0\in \F_q^*, t_1\in \F_q}\psi_1(t_0)\chi_{a_1}(t_1)\chi_{b}(t_0)\chi_{c_1}(t_0t_1)\overline{\psi_2(t_0)\chi_{a_2}(t_1)\chi_{b}(t_0)\chi_{c_2}(t_0t_1)}\\
   &=&\frac{1}{K}\sum\limits_{t_0\in \F_q^*, t_1\in \F_q}\psi_1\overline{\psi}_2(t_0)\chi((a_1-a_2)t_1+(c_1-c_2)t_0t_1)\\
   &=& \frac{1}{K}\sum\limits_{t_0\in \F_q^*}\psi(t_0)\sum\limits_{t_1\in \F_q}\chi((a+ct_0)t_1)~({\rm Set}~\psi=\psi_1\overline{\psi}_2, a=a_1-a_2, c=c_1-c_2)\\
   &=&\frac{q}{K}\sum\limits_{t_0\in \F_q^*, a+ct_0=0}\psi(t_0).
\end{eqnarray*}
\begin{itemize}
  \item If $a=c=0,$ since $\textbf{c}_1\neq \textbf{c}_2$, thus $\psi$ is nontrivial. Then we have $$K\textbf{c}_1\textbf{c}_2^H=q\sum\limits_{t_0\in \F_q^*}\psi(t_0)=0;$$
  \item If $a=0, c\neq 0$ or $a\neq0, c=0$, then $K\textbf{c}_1\textbf{c}_2^H=0$;
  \item If $a\neq0, c\neq 0$, then $K\textbf{c}_1\textbf{c}_2^H=q\psi(-\frac{a}{c})$.
\end{itemize}

\begin{equation*}
\textbf{c}_1\textbf{c}_2^H=\begin{cases}
\emph{ }\frac{q}{K}\psi(-\frac{a}{c}),  ~~~~~{\rm if}~a\neq0, c\neq0;\\
    \emph{ }0, ~~~~~~~~~~~~~~~{\rm otherwise}.\\
\end{cases}
\end{equation*}
Consequently, we infer that $|\textbf{c}_1\textbf{c}_2^H|\in \{0, \frac{1}{q-1}\}$ for any two distinct codewords $\textbf{c}_1, \textbf{c}_2$ in $C_2(R)$.
Hence, $I_{max}(C_1(R))=\frac{1}{q-1}.$
\end{proof}

Similarly, we show the near-optimality of the codebook $C_2(R)$ in the following theorem.
\begin{thm}\label{thm5}
Let the symbols be the same as those in Theorem \ref{thm4}. Then the codebook $C_2(R)$ asymptotically meets the
Welch bound.
\end{thm}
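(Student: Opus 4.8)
\textbf{Proof proposal for Theorem \ref{thm5}.}

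The plan is to mirror the argument of Theorem \ref{thm3} exactly: compute the Welch bound $I_w$ for the parameters of $C_2(R)$, form the ratio $I_{\max}(C_2(R))/I_w$ using the value $I_{\max}(C_2(R))=\frac{1}{q-1}$ from Theorem \ref{thm4}, and then show this ratio tends to $1$ as $q\to\infty$. First I would substitute $N=q^2(q-1)$ and $K=q(q-1)$ into $I_w=\sqrt{\frac{N-K}{(N-1)K}}$. The numerator $N-K=q(q-1)(q-1)=q(q-1)^2$ and the denominator $(N-1)K=(q^3-q^2-1)\,q(q-1)$, so after cancelling a factor of $q$ one gets $I_w=\sqrt{\frac{(q-1)^2}{(q^3-q^2-1)(q-1)}}=\sqrt{\frac{q-1}{q^3-q^2-1}}$.

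Next I would divide: since $I_{\max}(C_2(R))=\frac{1}{q-1}$, we have
\begin{equation*}
\frac{I_{\max}(C_2(R))}{I_w}=\frac{1}{q-1}\cdot\sqrt{\frac{q^3-q^2-1}{q-1}}=\sqrt{\frac{q^3-q^2-1}{(q-1)^3}}.
\end{equation*}
Then I would observe that both numerator and denominator are cubic polynomials in $q$ with leading coefficient $1$, so the quotient inside the square root tends to $1$; hence $\lim_{q\to\infty}\frac{I_{\max}(C_2(R))}{I_w}=1$, which is exactly the assertion that $C_2(R)$ is asymptotically optimal with respect to the Welch bound.

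There is no real obstacle here — the proof is a routine limit computation, and the only place to be careful is bookkeeping in the algebraic simplification of $N-K$ and $(N-1)K$ (in particular keeping track of the factor $q$ that cancels and confirming that the $(q-1)$ powers combine correctly). One could optionally remark that the ratio is slightly larger than $1$ for all finite $q$ (so the bound is approached from above), but this is not needed for the statement. The entire argument is complete once the limit is exhibited.
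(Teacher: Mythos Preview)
Your proposal is correct and matches the paper's own proof essentially line for line: the paper also substitutes $N=q^2(q-1)$, $K=q(q-1)$ into the Welch bound to obtain $I_w=\sqrt{\frac{q-1}{q^3-q^2-1}}$, forms the ratio $\frac{I_{\max}(C_2(R))}{I_w}=\sqrt{\frac{q^3-q^2-1}{(q-1)^3}}$, and concludes by noting the limit is $1$ as $q\to\infty$.
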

\begin{proof}
In view of Theorem \ref{thm4}, note that $N=q^2(q-1)$ and $K=q(q-1)$. Then the corresponding Welch bound of the codebook $C_2(R)$ is
\begin{eqnarray*}
  I_w &=& \sqrt{\frac{N-K}{(N-1)K}} \\
   &=& \sqrt{\frac{q^2(q-1)-q(q-1)}{(q^3-q^2-1)q(q-1)}}\\
   &=&\sqrt{\frac{q-1}{q^3-q^2-1}}.
\end{eqnarray*}It follows from Theorem \ref{thm4}, then we have
\begin{equation*}
\frac{I_{max}(C_2(R))}{I_w}=\sqrt{\frac{q^3-q^2-1}{(q-1)(q-1)^2}}
\end{equation*}
Obviously, we get $\lim\limits_{q\longrightarrow \infty}\frac{I_{max}(C_2(R))}{I_w}=1$, which implies that $C_2(R)$ asymptotically
meets the Welch bound.
\end{proof}

\section{Conclusions}
In this paper, we described the additive characters and multiplicative characters over the ring $R=\F_q+u\F_q~(u^2=0)$ in detail. Our
results on Gauss sums over the ring $R$ are calculated explicitly based on the additive and multiplicative characters. The purpose of studying the characters over $R$ is to present an application in the codebooks. Based on this idea, we proposed two constructions of codebooks and determined the maximum cross-correlation amplitude $I_{\max}(C)$ of codebooks generated by these two constructions. Moreover, we showed that these codebooks are asymptotically optimal with respect to Welch bound and the parameters of these codebooks are new.

In further research, it would be interesting to investigate the application of the new families of codebooks meeting the Welch bound or Levenstein bound by finding the new constructions of codebooks. In addition, we hope and believe that the better properties with respect to Gauss and Jacobi sums over rings will be studied and the results will be useful in applications.

\end{document}